\documentclass[10pt]{article}
\usepackage{amsmath,amsthm,amsfonts,amssymb,latexsym,graphicx}
\usepackage[pdfpagemode=UseNone,pdfstartview=FitH]{hyperref}

\makeatletter

\newif\iftwodates
\twodatesfalse

\renewcommand\maketitle{\begin{titlepage}%
  \pagenumbering{Alph}%
  \let\footnotesize\small
  \let\footnoterule\relax
  \let\footnote\thanks
  \null\vfil
  \vskip 30\p@
  \begin{center}%
    {\LARGE \bf \@title \par}%
    \vskip 3em%
    {\large
     \lineskip .75em%
     \begin{tabular}[t]{c}%
       \@author
     \end{tabular}\par}%
     \vskip 1.5em%
  \end{center}\par
  \vfill
  \begin{center}
    \raisebox{1.5cm}{\includegraphics[width=0.58\textwidth]%
      {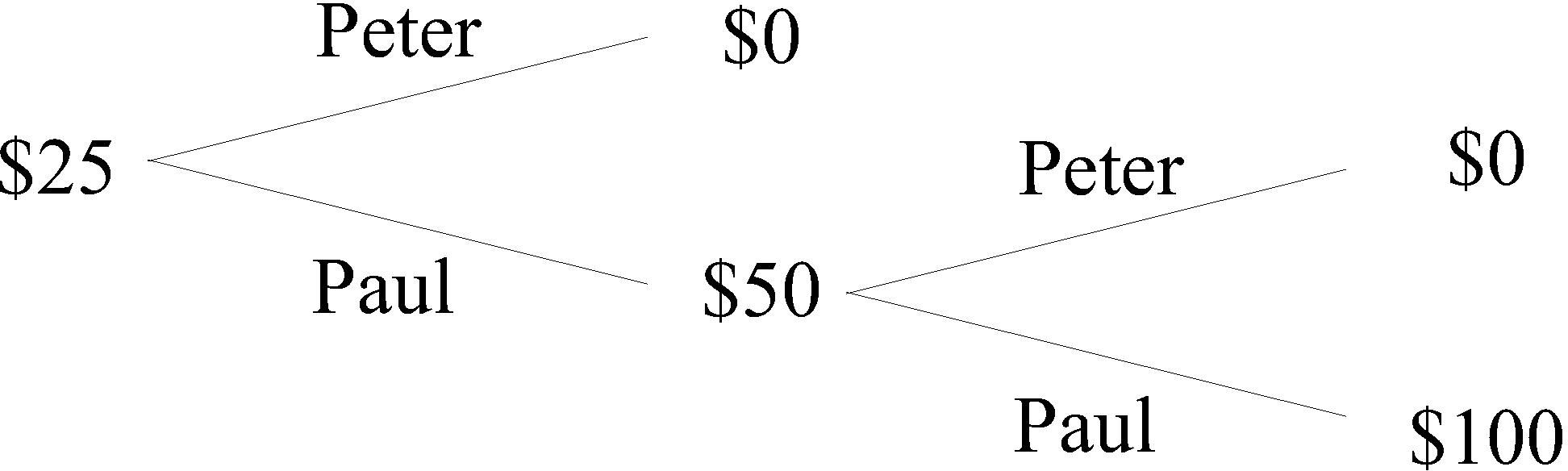}}%
    \hskip 3em%
    \includegraphics[width=0.29\textwidth]%
      {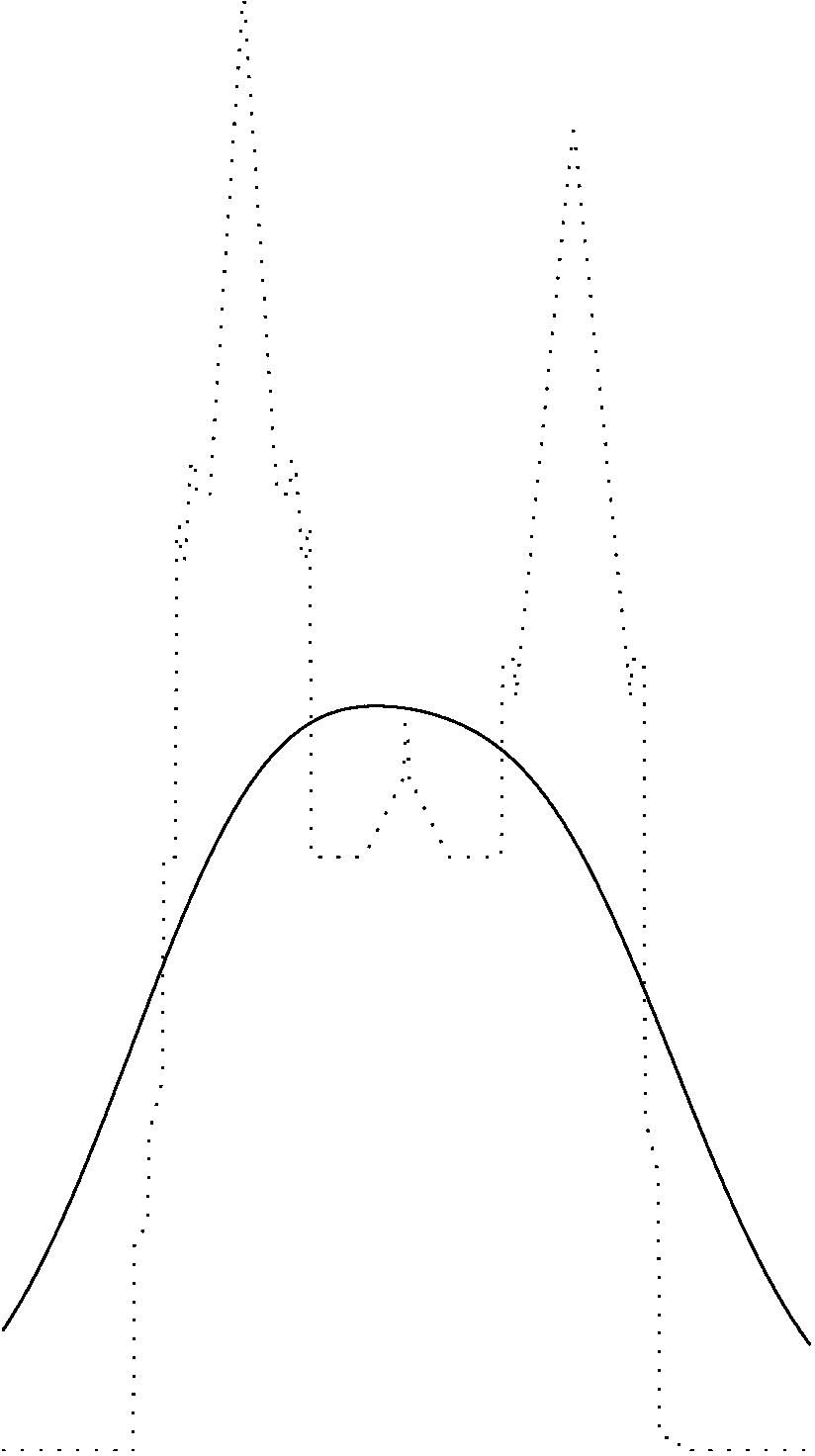}%
  \end{center}
  \@thanks
  \vfill
  \begin{center}
    {\large \bf The Game-Theoretic Probability and Finance Project}
  \end{center}
  \begin{center}
    {\large Working Paper \#\No}
  \end{center}
  \begin{center}
    {\iftwodates\large First posted \firstposted.
    Last revised \@date.\else\large\@date\fi}
  \end{center}
  \begin{center}
    Project web site:\\
    http://www.probabilityandfinance.com
  \end{center}
  \end{titlepage}%
  \setcounter{footnote}{0}%
  \global\let\thanks\relax
  \global\let\maketitle\relax
  \global\let\@thanks\@empty
  \global\let\@author\@empty
  \global\let\@date\@empty
  \global\let\@title\@empty
  \global\let\title\relax
  \global\let\author\relax
  \global\let\date\relax
  \global\let\and\relax
}

\renewenvironment{abstract}{%
  \titlepage\pagenumbering{roman}%
  \null\vfil
  \@beginparpenalty\@lowpenalty
  \begin{center}%
    \Large \bfseries \abstractname
    \@endparpenalty\@M
  \end{center}}%
  {\par\vfill\tableofcontents\thispagestyle{empty}\endtitlepage
  \pagenumbering{arabic}}

\renewenvironment{thebibliography}[1]
  {\section*{\refname}%
  \addcontentsline{toc}{section}{\refname}%
  \@mkboth{\MakeUppercase\refname}{\MakeUppercase\refname}%
  \list{\@biblabel{\@arabic\c@enumiv}}%
    {\settowidth\labelwidth{\@biblabel{#1}}%
    \leftmargin\labelwidth
    \advance\leftmargin\labelsep
    \@openbib@code
    \usecounter{enumiv}%
    \let\p@enumiv\@empty
    \renewcommand\theenumiv{\@arabic\c@enumiv}}%
    \sloppy
    \clubpenalty4000
    \@clubpenalty \clubpenalty
    \widowpenalty4000%
    \sfcode`\.\@m}
    {\def\@noitemerr
    {\@latex@warning{Empty `thebibliography' environment}}%
  \endlist}

\makeatother

\emergencystretch=5mm
\tolerance=400
\allowdisplaybreaks[4]

\newcommand{\GGG}{\mathcal{G}}
\newcommand{\HHH}{\mathcal{H}}

\newcommand{\MMM}{\mathcal{M}}
\newcommand{\SSS}{\mathcal{S}}

\newcommand{\bbbr}{\mathbb{R}}
\newcommand{\bbbp}{\mathbb{P}}

\newcommand{\LP}{\mathop{\underline{\bbbp}}\nolimits}

\newtheorem{proposition}{Proposition}

\theoremstyle{definition}

\newlength{\IndentI}
\newlength{\IndentII}
\newlength{\IndentIII}
\setlength{\IndentI}{6mm}
\setlength{\IndentII}{12mm}
\setlength{\IndentIII}{18mm}
\newlength{\WidthI}
\newlength{\WidthII}
\newlength{\WidthIII}
\setlength{\WidthI}{\textwidth}
\setlength{\WidthII}{\textwidth}
\setlength{\WidthIII}{\textwidth}
\addtolength{\WidthI}{-\IndentI}
\addtolength{\WidthII}{-\IndentII}
\addtolength{\WidthIII}{-\IndentIII}

\title{Game-Theoretic Capital Asset Pricing in Continuous Time}%
\author{Vladimir Vovk and Glenn Shafer}

\newcommand{\No}{2}
\twodatestrue
\newcommand{\firstposted}{December 30, 2001}

\begin{document}
\maketitle
\begin{abstract}
  We derive formulas for the performance of capital assets
  in continuous time from an efficient market hypothesis,
  with no stochastic assumptions
  and no assumptions about the beliefs or preferences of investors.
  Our efficient market hypothesis says that
  a speculator with limited means
  cannot beat a particular index by a substantial factor.
  Our results include a formula that resembles the classical CAPM formula
  for the expected simple return of a security or portfolio.

  This version of the article was essentially written in December 2001
  but remains a working paper.
\end{abstract}

In this article, we use an efficient market hypothesis
to derive formulas for the performance of capital assets in continuous time.
Our efficient market hypothesis says that a speculator with limited means
cannot beat a particular market index $m$ by a substantial factor.
From this hypothesis, we derive two results
concerning the average returns of a security or portfolio $s$:
\begin{itemize}
\item
   The average simple return for $s$ will fall short
   of the average simple return for $m$
   by an amount equal to the difference
   between the variance of $m$'s simple returns
   and the covariance of $m$'s and $s$'s simple returns.
   (See the formula in Proposition~\ref{prop:1}.)
   This agrees with the classical capital asset pricing model,
   except that the classical model relates
   theoretical centered moments of a probability distribution,
   whereas our result is about empirical uncentered moments.
\item
   The average logarithmic return for $s$
   will fall short of the average logarithmic return for $m$
   by an amount equal to one-half the variance
   of the difference between the simple returns.
   (See the formula in Proposition~\ref{prop:2}.)
\end{itemize}
These results follow from the efficient market hypothesis alone,
with no stochastic assumptions
and no assumptions about the beliefs or preferences of investors.

Our results are for continuous time and use nonstandard analysis.
They parallel the results for discrete time (with explicit error bounds)
in our article \cite{GTP1}.
That article gives more detail on motivation and possible applications.
For an alternative treatment of continuous time
(more traditional and not using nonstandard analysis),
see \cite{GTP44} and \cite{GTP45}.

\section{Definitions}

Our results depend on the explicit formulation of a game.  
Here we begin by describing the game informally.
Then, after introducing notation for the moments of 
$s$ and $m$'s returns, we state the protocol for the
game formally and explain how it can be used to formalize
the implications of our efficient market hypothesis.

\subsection{The Basic Capital Asset Pricing Game}

The capital asset pricing game has two principal players, 
Speculator and Market, who alternate play.
On each round,
Speculator decides how much of each security in the market to hold
(and possibly short),
and then Market determines Speculator's gain
by deciding how the prices of the securities change.
Allied with Market is a third player, Investor,
who also invests each day.
The game is a perfect-information game:
each player sees the others' moves.

We assume that there are $K+1$ securities in the market
and $N$ rounds (trading periods) in the game.
We number the securities from $0$ to $K$
and the rounds from $1$ to $N$, and we write $x_n^k$
for the simple return on security $k$ in round $n$.
For simplicity, we assume that $-1 < x_n^k < \infty$
for all $k$ and $n$;  a security price never becomes zero.
We write $x_n$ for the vector
$(x_n^0,\dots,x_n^K)$, which lies in $(-1,\infty)^{K+1}$.
Market determines the returns;
$x_n$ is his move on the $n$th round.

We assume that the first security, indexed by $0$, is our market index $m$.
If $m$ is a portfolio formed from the other securities,
then $x_n^0$ is an average of the $x_n^1,\dots,x_n^K$,
but we do not insist on this.

We write $\MMM_n$ for the capital at the end of round $n$
resulting from investing one monetary unit in $m$
at the beginning of the game:
\[
  \MMM_n
  :=
  \prod_{i=1}^n (1 + x_i^0).
\]
Thus $\MMM_N$ is the final capital resulting from this investment.

Investor begins with capital equal to one monetary unit
and is allowed to redistribute his current capital
across all $K+1$ securities on each round.
If we write $\GGG_n$ for his capital at the end of the $n$th round,
then
\[
  \GGG_n
  :=
  \prod_{i=1}^n
  \sum_{k=0}^K
  g_i^k (1+x_i^k),
\]
where $g_i^k$
is the fraction of his capital he holds in security $k$ during the $i$th round.
The $g_i^k$ must sum to~$1$ over~$k$,
but $g_i^k$ may be negative for a particular $k$
(in this case Investor is selling $k$ short).
Investor's final capital is $\GGG_N$.

We call the set of all possible sequences
$(g_1,x_1,\dots,g_N,x_N)$
the \emph{sample space} of the game,
and we designate it by $\Omega$:
\[
  \Omega
  :=
  \left(
    \bbbr^{K+1}\times(-1,\infty)^{K+1}
  \right)^N.
\]
We call any subset of $\Omega$ an \emph{event}.
Any statement about Investor's returns determines an event,
as does any comparison of Investor's and Market's returns.

Speculator also starts with one monetary unit
and is allowed to redistribute
his current capital across all $K+1$ securities on each round.
We write $\HHH_n$ for his capital at the end of the $n$th round:
\[
  \HHH_n
  :=
  \prod_{i=1}^n
  \sum_{k=0}^K
  h_i^k (1+x_i^k),
\]
where $h_i^k$ is the fraction of his capital he 
holds in security $k$ during the $i$th round.
The moves by Speculator are not recorded in the sample space;
they do not define events.

Our results use non-standard analysis.
We assume that the number $N$ of rounds of the game is 
infinitely large.  The game begins at time $0$.  Each
round takes an infinitesimal amount of time $dt$, and 
play ends at time $T$, which is an infinitely large 
positive real number:  $T=Ndt$.
A brief summary of nonstandard analysis,
sufficient for our purposes, is provided in
\cite{Shafer/Vovk:2001}, \S11.5;
further details can be found in \cite{Goldblatt:1998}.

\subsection{Notation for Moments}

Let us write $s_n$ for Investor's simple return on round $n$,
and $m_n$ for the simple return of
the market index $m$ on round $n$:
\[
       s_n
       :=
       \frac{\GGG_n-\GGG_{n-1}}{\GGG_{n-1}}
       =
       \sum_k g_n^k x_n^k,
\]
and 
\[
       m_n := x_n^0.
\]
For every play of the game
we define the following nonstandard numbers:
\[
  \mu_s
  =
  \frac1N
  \sum_{n=1}^N
  \frac{s_n}{dt}
  =
  \frac1T
  \sum_{n=1}^N
  s_n
\]
(this is the average rate of increase in Investor's capital),
\[
  \sigma_s^2
  =
  \frac1N
  \sum_{n=1}^N
  \frac{s_n^2}{dt}
  =
  \frac1T
  \sum_{n=1}^N
  s_n^2
\]
($\sigma_s$ is the empirical volatility of Investor's capital),
\[
  \mu_m
  =
  \frac1N
  \sum_{n=1}^N
  \frac{m_n}{dt}
  =
  \frac1T
  \sum_{n=1}^N
  m_n,
\quad
  \sigma_m^2
  =
  \frac1N
  \sum_{n=1}^N
  \frac{m_n^2}{dt}
  =
  \frac1T
  \sum_{n=1}^N
  m_n^2
\]
(analogous quantities for the index).
We also set
\begin{align*}
  \sigma_{sm}
  &=
  \frac1N
  \sum_{n=1}^N
  \frac{s_n m_n}{dt}
  =
  \frac1T
  \sum_{n=1}^N
  s_n m_n, \\
  \sigma_{s-m}^2
  &=
  \frac1N
  \sum_{n=1}^N
  \frac{(s_n-m_n)^2}{dt}
  =
  \frac1T
  \sum_{n=1}^N
  (s_n-m_n)^2
\end{align*}
and
\begin{align*}
  \lambda_s
  =
  \frac1N
  \sum_{n=1}^N
  \frac{\ln(1+s_n)}{dt}
  =
  \frac1T
  \sum_{n=1}^N
  \ln(1+s_n), \\
  \lambda_m
  =
  \frac1N
  \sum_{n=1}^N
  \frac{\ln(1+m_n)}{dt}
  =
  \frac1T
  \sum_{n=1}^N
  \ln(1+m_n)
\end{align*}
(the last two are the average logarithmic rates of growth).
Notice that $\exp(\lambda_s T)$ is the total relative increase $\GGG_N$
in Investor's capital
and $\exp(\lambda_m T)$ is the total relative increase $\MMM_N$
in the value of the index;
therefore,
$\lambda_s$ and $\lambda_m$ are direct measures
of Investor's and the index's performance.

As a simple example,
consider the case where Investor
always holds one share of a security whose price $S_t$ is generated by Market
using the stochastic differential equation
\[
  \frac{dS_t}{S_t}
  =
  \mu dt + \sigma dW_t,
\]
where $W_t$ is a Brownian motion.
In this case, $\mu_s$ will be infinitely close to $\mu$
almost surely,
and $\sigma_s$ will be infinitely close to $\sigma$
almost surely.

\subsection{The Protocol}

We can now state precisely the protocol for the game involving
Investor, Market, and Speculator:

\bigskip

\noindent
\textsc{Basic Capital Asset Pricing Protocol (Basic CAP Protocol)}

\noindent
\textbf{Players:} Investor, Market, Speculator

\noindent
\textbf{Parameters:}

\parshape=2
\IndentI  \WidthI
\IndentI  \WidthI
\noindent
Natural number $K$ (number of non-index securities in the market)\\
Infinite natural number $N$ (number of rounds or trading periods)

\noindent
\textbf{Protocol:}

\parshape=10
\IndentI  \WidthI
\IndentI  \WidthI
\IndentI  \WidthI
\IndentI  \WidthI
\IndentII \WidthII
\IndentII \WidthII
\IndentII \WidthII
\IndentII \WidthII
\IndentII \WidthII
\IndentII \WidthII
\noindent
$\GGG_0 := 1$.\\
$\HHH_0:=1$.\\
$\MMM_0 := 1$.\\
FOR $n=1,2,\dots,N$:\\
  Investor selects $g_n \in \bbbr^{K+1}$
        such that $\sum_{k=0}^K g_n^k = 1$.\\
  Speculator selects $h_n \in \bbbr^{K+1}$
        such that $\sum_{k=0}^K h_n^k = 1$.\\
  Market selects $x_n \in (-1,\infty)^{K+1}$.\\
  $\GGG_n := \GGG_{n-1} \sum_{k=0}^K g_n^k (1 + x_n^k)$.\\
  $\HHH_n := \HHH_{n-1} \sum_{k=0}^K h_n^k (1 + x_n^k)$.\\
  $\MMM_n := \MMM_{n-1}(1 + x_n^0)$.

\noindent
\textbf{Restriction:}

\noindent
Market and Investor are required to make
$\sigma_s^2$ and $\sigma_m^2$ finite
and to make
$\max_n |s_n|$ and $\max_n |m_n|$ infinitesimal.

\bigskip

\noindent
The condition that $\max_n |s_n|$ and $\max_n |m_n|$ be infinitesimal
is a continuity condition,
but there is a slight complication arising from the fact that $T$ is infinite:
if $T$ is extremely large as compared with $1/dt$,
the largest of the huge number of increments might become nonnegligible.
This, however, would be an extreme situation:
e.g., for the usual diffusion processes the condition $\ln T \le (dt)^{-1/2}$
is more than sufficient for the largest increment to be negligible.

\subsection{Predictions from the EMH for $m$}

We now adopt an efficient market hypothesis:  Market will 
not allow Speculator to become very rich relative to the index $m$.
Intuitively, this sometimes 
implies that a certain event $A$ will happen.
To formalize this intuition, let us 
say that \emph{the efficient market hypothesis for} $m$
\emph{predicts} $A$ \emph{at level} $\alpha>0$
if Speculator has a strategy $\SSS$ in the basic CAP protocol that ensures
the following:
$\HHH_n \ge 0$ for $n=1,\dots,N$
and
either (1) $\HHH_N \ge \frac{1}{\alpha} \MMM_N$
or (2) $(g_1,x_1,\dots,g_N,x_N) \in A$.
(It will be convenient to say that such a strategy $\SSS$
\emph{witnesses} that the efficient market hypothesis predicts $A$
at level $\alpha$.)

For brevity, we will abbreviate ``efficient market hypothesis for $m$''
to ``EMH for $m$''.
As the reader may have noticed, ``EMH for $m$'' is not a mathematical concept for us;
we have not given it a precise definition.
We have, however, provided a precise definition
for the phrase ``the EMH for $m$ predicts $A$ at level $\alpha>0$''.

Our confidence that Speculator will not beat the market by $\frac{1}{\alpha}$
is greater for smaller $\alpha$.
So a prediction of $A$ at level $\alpha$ becomes more emphatic
as $\alpha$ decreases.
The most emphatic prediction arises in the limit, 
when the EMH for $m$ predicts $A$ at every level $\alpha>0$.
In this case, we say simply that \emph{the EMH for} $m$
\emph{predicts} $A$.

\section{Results}

Proofs of the propositions in this section
will be provided in \S\ref{sec:proofs}.

\subsection{Capital Asset Pricing Model}

\begin{proposition}\label{prop:1}
  For any $\epsilon>0$,
  the EMH for $m$ predicts
  \[
    \left|
      \mu_s - \mu_m + \sigma_m^2 - \sigma_{sm}
    \right|
    <
    \epsilon
    (1 + \sigma_{s-m}^2).
  \]
\end{proposition}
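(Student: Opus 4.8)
The idea is to construct, for each $\epsilon>0$, an explicit strategy $\SSS$ for Speculator that witnesses the prediction: Speculator should hold a position that is long in $s$ (i.e. a copy of Investor's portfolio $g_n$) and short in $m$, scaled appropriately, so that his capital $\HHH_N$ tracks $\exp$ of a telescoping sum whose leading term is the quantity $\mu_s-\mu_m+\sigma_m^2-\sigma_{sm}$. More precisely, I would have Speculator use, on round $n$, the portfolio $h_n = m + c\,(g_n - m)$ where $m=(1,0,\dots,0)$ denotes ``all capital in the index'' and $c$ is a small constant to be chosen at the end; note $\sum_k h_n^k = 1$ automatically. Then the per-round factor is $1 + m_n + c(s_n-m_n)$, so
\[
  \ln\frac{\HHH_N}{\MMM_N}
  =
  \sum_{n=1}^N \Bigl(\ln\bigl(1+m_n+c(s_n-m_n)\bigr) - \ln(1+m_n)\Bigr).
\]

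**Key steps.** First I would Taylor-expand each summand in the small quantities $m_n$, $s_n-m_n$, which are legitimately small because the Restriction forces $\max_n|m_n|$ and $\max_n|s_n|$ (hence $\max_n|s_n-m_n|$) to be infinitesimal; using $\ln(1+u)=u-u^2/2+O(u^3)$ the second-order expansion gives, to leading order,
\[
  \ln\frac{\HHH_N}{\MMM_N}
  \approx
  c\sum_n (s_n-m_n) - c\sum_n m_n(s_n-m_n) - \tfrac{c^2}{2}\sum_n (s_n-m_n)^2
  + (\text{cubic remainder}).
\]
Dividing by $T$ and using the definitions of the moments, the first two terms combine to $c(\mu_s-\mu_m+\sigma_m^2-\sigma_{sm})$ and the third is $-\tfrac{c^2}{2}\sigma_{s-m}^2$. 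Second, I would run the symmetric strategy with $-c$ in place of $c$ to get a lower bound of the same shape; since the EMH-prediction framework lets us use two strategies (or average them, keeping $\HHH_n\ge 0$), combining the two yields that either $s$ cannot beat this combined position by $1/\alpha$, or the event holds. Third, choosing $c$ proportional to $\epsilon$ and checking that the resulting inequality $c|\mu_s-\mu_m+\sigma_m^2-\sigma_{sm}| \le \tfrac{c^2}{2}\sigma_{s-m}^2 + \ln\alpha^{-1}\cdot(\text{bounded}) + \epsilon(\cdot)$ can be massaged into the stated form $|\mu_s-\mu_m+\sigma_m^2-\sigma_{sm}| < \epsilon(1+\sigma_{s-m}^2)$ for every level $\alpha$.

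**The main obstacle.** The delicate point is controlling the cubic (and higher) remainder terms $\sum_n |s_n-m_n|^3$, $\sum_n m_n^2|s_n-m_n|$, etc., and showing they are infinitesimal relative to $T(1+\sigma_{s-m}^2)$. The Restriction gives $\sigma_{s-m}^2 = \tfrac1T\sum_n(s_n-m_n)^2$ finite and $\max_n|s_n-m_n|$ infinitesimal; a cubic sum is then bounded by $\max_n|s_n-m_n|$ times $\sum_n(s_n-m_n)^2 = T\sigma_{s-m}^2$, so after dividing by $T$ it is infinitesimal times $\sigma_{s-m}^2$, which is absorbed into the $\epsilon(1+\sigma_{s-m}^2)$ slack — this is exactly why the error term in the Proposition is stated multiplicatively in $\sigma_{s-m}^2$ rather than additively. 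I also need the mixed terms like $\tfrac1T\sum_n m_n^2(s_n-m_n)$; here $|m_n^2(s_n-m_n)| \le \max_n|s_n-m_n|\cdot m_n^2$, and $\tfrac1T\sum_n m_n^2 = \sigma_m^2$ is finite by the Restriction, so again this is infinitesimal. The second genuine subtlety is the bookkeeping needed to extract a clean inequality uniformly in $\alpha$: one takes $\alpha\to 0$ so $\ln\alpha^{-1}\to\infty$, which at first looks wrong, but because the prediction is required only \emph{for each fixed} $\alpha$ and $c$ may depend on $\alpha$, one scales $c\sim\sqrt{\ln\alpha^{-1}}\cdot\epsilon$-style or simply notes that for the event to fail we need $\ln(\HHH_N/\MMM_N)\ge \ln\alpha^{-1}\ge 0$, and the quadratic-in-$c$ lower bound on the left side, optimized over $c$, already forces $|\mu_s-\mu_m+\sigma_m^2-\sigma_{sm}|$ to be at most $\sqrt{2\sigma_{s-m}^2 \cdot (\text{something infinitesimal})}$ plus lower-order terms — which is $\le \epsilon(1+\sigma_{s-m}^2)$ by the inequality $2\sqrt{ab}\le a+b$. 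Getting this last chain of estimates to close cleanly, with the right choice of $c$, is where the real work lies; everything else is routine Taylor expansion and nonstandard-analysis bookkeeping.
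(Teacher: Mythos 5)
Your proposal is correct and follows essentially the same route as the paper: the witnessing strategy $h_n = m + c(g_n - m)$ with per-round factor $1 + m_n + c(s_n - m_n)$ is exactly the paper's choice (it invests $\epsilon$ in $s$ and $1-\epsilon$ in $m$, i.e.\ $c=\epsilon$, and $c=-\epsilon$ for the reverse inequality), and the error control via the Taylor remainder of $\ln(1+x)$, the infinitesimality of $\max_n|s_n|$ and $\max_n|m_n|$, the finiteness of the second moments, and the infinitude of $T$ (which kills the $\frac{1}{T\epsilon}\ln\frac{1}{\alpha}$ term for each fixed $\alpha$, so no $\alpha$-dependent rescaling of $c$ is needed) matches the paper's argument, as does the combination of the two one-sided strategies at level $\alpha/2$ each.
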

If it is known \emph{a priori} that
$\sigma_{s-m}^2 < C$ for some positive constant $C$,
then for every $\epsilon>0$,
the EMH for $m$ predicts that
\[
    \left|
  \mu_s - \mu_m + \sigma_m^2 - \sigma_{sm}
    \right|
  <
  \epsilon.
\]
This is analogous to the capital asset pricing model (CAPM)
of the established theory.  Remarkably, we get the result without
the strong assumptions of that theory.  We assume nothing about 
Investor's beliefs or preferences, and we do not assume that 
Market chooses his prices stochastically.

\subsection{Theoretical Performance Deficit}

The next proposition justifies calling $\sigma_{s-m}^2/2$
the ``theoretical performance deficit''.
\begin{proposition}\label{prop:2}
  For any $\epsilon>0$,
  the EMH for $m$ predicts that
  \[
    \left|
      \lambda_s
      -
      \lambda_m
      +
      \frac12 \sigma_{s-m}^2
    \right|
    <
    \epsilon (1 + \sigma_{s-m}^2).
  \]
\end{proposition}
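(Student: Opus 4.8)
The plan is to reduce Proposition~\ref{prop:2} to Proposition~\ref{prop:1} by a Taylor-expansion argument applied round by round. The key identity is the elementary expansion $\ln(1+x) = x - \tfrac12 x^2 + O(x^3)$, valid uniformly for $x$ in a bounded neighborhood of $0$. Since the Restriction in the Basic CAP protocol forces $\max_n|s_n|$ and $\max_n|m_n|$ to be infinitesimal, we may apply this expansion to each increment $s_n$ and $m_n$ with a controlled cubic error.

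First I would write, for each round $n$,
\[
  \ln(1+s_n) - \ln(1+m_n)
  = (s_n - m_n) - \tfrac12(s_n^2 - m_n^2) + r_n,
\]
where $|r_n| \le c\,(|s_n|^3 + |m_n|^3)$ for some universal constant $c$. Summing over $n$, dividing by $T$, and recalling the definitions of $\lambda_s,\lambda_m,\mu_s,\mu_m,\sigma_s^2,\sigma_m^2$, one gets
\[
  \lambda_s - \lambda_m
  = (\mu_s - \mu_m) - \tfrac12(\sigma_s^2 - \sigma_m^2) + R,
  \qquad
  R = \frac1T\sum_{n=1}^N r_n.
\]
Next I would bound $R$: since $|s_n|^3 \le (\max_n|s_n|)\,s_n^2$ and likewise for $m_n$, we have $|R| \le c\,(\max_n|s_n|)\,\sigma_s^2 + c\,(\max_n|m_n|)\,\sigma_m^2$, which is infinitesimal because $\sigma_s^2,\sigma_m^2$ are finite and the maxima are infinitesimal. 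So $R$ contributes nothing beyond any prescribed $\epsilon$.

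Then I would combine with Proposition~\ref{prop:1}, which gives $\mu_s - \mu_m = \sigma_{sm} - \sigma_m^2$ up to an error bounded by $\epsilon(1+\sigma_{s-m}^2)$ (at the cost of Speculator playing the witnessing strategy from that proposition). Substituting and using the algebraic identity
\[
  -\sigma_m^2 + \sigma_{sm} - \tfrac12(\sigma_s^2 - \sigma_m^2)
  = -\tfrac12\bigl(\sigma_s^2 - 2\sigma_{sm} + \sigma_m^2\bigr)
  = -\tfrac12\,\sigma_{s-m}^2,
\]
one obtains $\lambda_s - \lambda_m = -\tfrac12\sigma_{s-m}^2$ up to the desired error. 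The main subtlety — the part I would be most careful about — is the bookkeeping of \emph{which} Speculator strategy witnesses the prediction: Proposition~\ref{prop:1}'s guarantee is that Speculator has a strategy making $\HHH_N \ge \tfrac1\alpha\MMM_N$ unless the stated inequality holds, and I need to inherit exactly that strategy (the infinitesimal terms $R$ do not require any extra action by Speculator, since they hold deterministically on every play satisfying the Restriction). One must also check that the substitution is legitimate when $\sigma_{s-m}^2$ is itself infinite: the error bound $\epsilon(1+\sigma_{s-m}^2)$ in Proposition~\ref{prop:1} is already of the required form, and the infinitesimal $|R|$ is trivially $\le \epsilon(1+\sigma_{s-m}^2)$ as well, so adjusting constants (replacing $\epsilon$ by $\epsilon/2$ at the two places it enters) closes the argument.
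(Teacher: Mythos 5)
Your proposal is correct and follows essentially the same route as the paper: Taylor-expand $\ln(1+s_n)-\ln(1+m_n)$ with a cubic remainder, show the accumulated remainder is infinitesimal because $\sigma_s^2$ and $\sigma_m^2$ are finite while $\max_n|s_n|$ and $\max_n|m_n|$ are infinitesimal, and then reduce to Proposition~\ref{prop:1} via the identity $\tfrac12\sigma_{s-m}^2=\tfrac12\sigma_s^2+\tfrac12\sigma_m^2-\sigma_{sm}$, inheriting the witnessing strategy unchanged. The only cosmetic difference is that the paper uses the explicit remainder bounds $\gamma(x)=\tfrac13(x/(1+x))^3$ and $\Gamma(x)=\tfrac13 x^3$ and argues via two one-sided inequalities rather than a single $O(x^3)$ constant, which changes nothing of substance.
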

Again if it is known \emph{a priori} that
$\sigma_{s-m}^2 < C$ for some positive constant $C$,
then for every $\epsilon>0$
the EMH for $m$ predicts that
\[
  \left|
    \lambda_s
    -
    \lambda_m
    +
    \frac12 \sigma_{s-m}^2
  \right|
  <
  \epsilon.
\]
This result suggests that an analysis of the variance of 
the vector of differences
$(s_1 - m_1,\dots,s_N - m_N)$ might give
insight into the performance of the portfolio $s$.

\section{Proofs}
\label{sec:proofs}

\begin{proof}[Proof of Proposition~\ref{prop:1}]
  When $x>-1$,
  we can expand $\ln(1+x)$ in a Taylor's series with remainder:
  \begin{equation}\label{eq:Taylor}
    \ln(1+x)
    =
    x - \frac12 x^2 + \frac13 x^3 \frac{1}{(1+\theta x)^3},
  \end{equation}
  where $\theta$, which depends on $x$, satisfies $0\le\theta\le1$.
  Since
  \[
    \gamma(x)
    \le
    \frac13 x^3 \frac{1}{(1+\theta x)^3}
    \le
    \Gamma(x),
  \]
  where the functions $\gamma$ and $\Gamma$ are defined by
  \[
    \gamma(x)
    :=
    \frac13
    \left(
      \frac{x}{1+x}
    \right)^3,
  \quad
    \Gamma(x)
    :=
    \frac13 x^3,
  \]
  we can see that~\eqref{eq:Taylor} implies
  \[
    \ln(1+x)
    \le
    x - \frac12 x^2
    +
    \Gamma(x)
  \]
  and
  \[
    \ln(1+x)
    \ge
    x - \frac12 x^2
    +
    \gamma(x).
  \]
  Notice that the functions $\Gamma$ and $\gamma$ are monotonically increasing.

  On a few occasions we will use the identity
  \[
    \frac{\sigma_{s-m}^2}{2}
    =
    \frac{\sigma_s^2}{2}
    +
    \frac{\sigma_m^2}{2}
    -
    \sigma_{sm}.
  \]

  Now we are ready to start proving the proposition.
  First we split it into two:
  we will prove separately that the EMH for $m$ predicts
  \begin{equation}\label{eq:CAPM1}
    \mu_s - \mu_m + \sigma_m^2 - \sigma_{sm}
    <
    \epsilon (1 + \sigma_{s-m}^2)
  \end{equation}
  and that the EMH for $m$ predicts
  \begin{equation}\label{eq:CAPM2}
    \mu_s - \mu_m + \sigma_m^2 - \sigma_{sm}
    >
    -\epsilon (1 + \sigma_{s-m}^2).
  \end{equation}
  Indeed, if Speculator has a strategy witnessing
  that the EMH for $m$ predicts~\eqref{eq:CAPM1} at level $\alpha/2$
  and a strategy witnessing
  that the EMH for $m$ predicts~\eqref{eq:CAPM2} at level $\alpha/2$,
  the combination of these strategies
  (i.e., splitting his money equally into two accounts
  and letting the first account be managed according to the first strategy
  and the second account be managed according to the second strategy)
  witnesses that the EMH for $m$ predicts
  the conjunction of~\eqref{eq:CAPM1} and~\eqref{eq:CAPM2} at level $\alpha$.
  (This argument is essentially an implicit application of the inequality
  \[
    \LP(A\cap B)
    \ge
    \LP(A)
    +
    \LP(B)
    -
    1
  \]
  from \cite{Shafer/Vovk:2001}, Proposition~8.10.3 on p.~186.)

  First we prove~\eqref{eq:CAPM1}.
  Without loss of generality, assume $0 < \epsilon < 1$.
  For any $\alpha>0$,
  Speculator has a trivial strategy
  witnessing that the EMH for $m$ predicts
  \begin{equation}\label{eq:auxgoal1}
    \prod_{n=1}^N
    \left(
      1 + \epsilon s_n + (1-\epsilon) m_n
    \right)
    <
    \frac{1}{\alpha}
    \prod_{n=1}^N
    \left(
      1 + m_n
    \right)
  \end{equation}
  at level $\alpha$:
  On each round, he invests $\epsilon$ of his capital in $s$
  (Investor's portfolio)
  and $1-\epsilon$ of his capital in $m$.
  But we can rewrite~\eqref{eq:auxgoal1} as
  \[
    \sum_{n=1}^N
    \biggl(
      \ln
      \left(
        1 + \epsilon s_n + (1-\epsilon) m_n
      \right)
      -
      \ln
      \left(
        1 + m_n
      \right)
    \biggr)
    <
    \ln \frac{1}{\alpha}.
  \]
  This implies
  \begin{multline*}
    \sum_{n=1}^N
    \biggl(
      \epsilon s_n + (1-\epsilon) m_n
      -
      \frac12 \epsilon^2 s_n^2 - \frac12 (1-\epsilon)^2 m_n^2
      - \epsilon (1-\epsilon) s_n m_n \\
      + \gamma(\epsilon s_n + (1-\epsilon) m_n)
      -
      m_n
      + \frac12 m_n^2
      - \Gamma(m_n)
    \biggr)
    <
    \ln\frac{1}{\alpha}
  \end{multline*}
  or
  \begin{multline*}
    \sum_{n=1}^N
    \biggl(
      \epsilon
      \left(
        s_n - m_n + m_n^2 - s_n m_n
      \right)
      -
      \frac12 \epsilon^2
      \left(
        s_n^2 + m_n^2 - 2 s_n m_n
      \right) \\
      + \gamma(s_n \wedge m_n)
      - \Gamma(m_n)
    \biggr)
    <
    \ln\frac{1}{\alpha}
  \end{multline*}
  or
  \begin{multline*}
    \epsilon
    \left(
      \mu_s - \mu_m + \sigma_m^2 - \sigma_{sm}
    \right)
    -
    \frac12 \epsilon^2
    \left(
      \sigma_s^2 + \sigma_m^2 - 2 \sigma_{sm}
    \right) \\
    <
    \frac1T
    \sum_{n=1}^N
    \left|\gamma(s_n)\right|
    +
    \frac1T
    \sum_{n=1}^N
    \left|\gamma(m_n)\right|
    +
    \frac1T
    \sum_{n=1}^N
    \Gamma(m_n)
    +
    \frac1T\ln\frac{1}{\alpha}
  \end{multline*}
  or
  \begin{multline*}
    \mu_s - \mu_m + \sigma_m^2 - \sigma_{sm}
    <
    \frac12 \epsilon
    \sigma_{s-m}^2 \\
    +
    \frac{1}{3\epsilon}
    \sigma_s^2
    \max_n \frac{|s_n|}{|1+s_n|^3}
    +
    \frac{1}{3\epsilon}
    \sigma_m^2
    \max_n \frac{|m_n|}{|1+m_n|^3}
    +
    \frac{1}{3\epsilon}
    \sigma_m^2
    \max_n |m_n|
    +
    \frac{1}{T\epsilon}\ln\frac{1}{\alpha}.
  \end{multline*}
  This completes the proof of~\eqref{eq:CAPM1}
  since $\max_n |s_n|$ and $\max_n |m_n|$ are infinitesimal
  and $T$ is infinite.

  Now we prove~\eqref{eq:CAPM2}.
  Without loss of generality, assume $\epsilon<1/3$.
  Consider a strategy for Speculator
  that calls for investing $-\epsilon$ of his capital in $s$
  and investing $1+\epsilon$ of his capital in $m$
  on every round.
  Using the inequalities $m_n\ge-1/2$ and $s_n\le1$
  (remember that $\max_n |m_n|$ and $\max_n |s_n|$ are infinitesimal),
  we can see that this strategy's return on round $n$ is
  \[
    -\epsilon s_n + (1+\epsilon) m_n
    \ge
    -\epsilon + (1+\epsilon)\left(-\frac12\right)
    >
    -1
  \]
  and so it does not risk bankruptcy for Speculator.
  It witnesses that the EMH for $m$ predicts
  \begin{equation}\label{eq:auxgoal2}
    \prod_{n=1}^N
    \left(
      1 - \epsilon s_n + (1+\epsilon) m_n
    \right)
    <
    \frac{1}{\alpha}
    \prod_{n=1}^N
    \left(
      1 + m_n
    \right)
  \end{equation}
  at level $\alpha$,
  and~\eqref{eq:auxgoal2} can be transformed
  (analogously to~\eqref{eq:auxgoal1} with $\epsilon$ replaced by $-\epsilon$)
  as follows:
  \begin{gather*}
    \sum_{n=1}^N
    \biggl(
      \ln
      \left(
        1 - \epsilon s_n + (1+\epsilon) m_n
      \right)
      -
      \ln
      \left(
        1 + m_n
      \right)
    \biggr)
    <
    \ln\frac{1}{\alpha}; \\[1mm]
    \sum_{n=1}^N
    \biggl(
      -\epsilon s_n + (1+\epsilon) m_n
      - \frac12 \epsilon^2 s_n^2 - \frac12 (1+\epsilon)^2 m_n^2
      + \epsilon (1+\epsilon) s_n m_n \\
      \gamma(-\epsilon s_n + (1+\epsilon) m_n)
      -
      m_n
      + \frac12 m_n^2
      - \Gamma(m_n)
    \biggr)
    <
    \ln\frac{1}{\alpha}; \\[1mm]
    \sum_{n=1}^N
    \biggl(
      -\epsilon
      \left(
        s_n - m_n + m_n^2 - s_n m_n
      \right)
      -
      \frac12 \epsilon^2
      \left(
        s_n^2 + m_n^2 - 2 s_n m_n
      \right) \\
      + \gamma\left(m_n \wedge (2m_n - s_n)\right)
      -
      \Gamma(m_n)
    \biggr)
    <
    \ln\frac{1}{\alpha}; \\[1mm]
    -\epsilon
    \left(
      \mu_s - \mu_m + \sigma_m^2 - \sigma_{sm}
    \right)
    -
    \frac12 \epsilon^2
    \sigma_{s-m}^2 \\
    <
    \frac1T \sum_{n=1}^N \left|\gamma(m_n)\right|
    +
    \frac1T \sum_{n=1}^N \left|\gamma(2m_n-s_n)\right|
    +
    \frac1T \sum_{n=1}^N \Gamma(m_n)
    +
    \frac1T \ln\frac{1}{\alpha}; \\[1mm]
    \mu_s - \mu_m + \sigma_m^2 - \sigma_{sm}
    >
    -\frac12 \epsilon \sigma_{s-m}^2 \\
    -
    \frac{1}{3\epsilon}
    \sigma_m^2
    \max_n \frac{|m_n|}{|1+m_n|^3}
    -
    \frac{1}{3\epsilon}
    \sigma_{2m-s}^2
    \max_n \frac{|2m_n-s_n|}{|1+2m_n-s_n|^3}
    -
    \frac{1}{3\epsilon}
    \sigma_m^2
    \max_n |m_n| \\
    -
    \frac{1}{T\epsilon}
    \ln\frac{1}{\alpha}
  \end{gather*}
  (we have used the obvious notation $\sigma_{2m-s}^2$).
  This proves~\eqref{eq:CAPM2}.
\end{proof}

\begin{proof}[Proof of Proposition~\ref{prop:2}]
  We can bound the left-hand side of the inequality in Proposition~\ref{prop:2}
  from above as follows:
  \begin{multline*}
    \lambda_s - \lambda_m + \frac12 \sigma_{s-m}^2 \\
    \le
    \left(
      \mu_s - \frac12 \sigma_s^2
      +
      \frac1T \sum_{n=1}^N \Gamma(s_n)
    \right)
    -
    \left(
      \mu_m - \frac12 \sigma_m^2
      +
      \frac1T \sum_{n=1}^N \gamma(m_n)
    \right) \\
    +
    \frac{\sigma_s^2}{2}
    +
    \frac{\sigma_m^2}{2}
    -
    \sigma_{sm} \\
    \le
    \mu_s - \mu_m + \sigma_m^2 - \sigma_{sm}
    +
    \frac1T \sum_{n=1}^N \Gamma(s_n)
    +
    \frac1T \sum_{n=1}^N \left|\gamma(m_n)\right|;
  \end{multline*}
  combining this with Proposition~\ref{prop:1},
  we can see that the EMH for $m$ predicts
  \[
    \lambda_s
    -
    \lambda_m
    +
    \frac12 \sigma_{s-m}^2
    <
    \epsilon (1 + \sigma_{s-m}^2),
  \]
  for any $\epsilon>0$.

  In the same way, we can bound the left-hand side
  of the inequality in Proposition~\ref{prop:2} from below:
  \begin{multline*}
    \lambda_s - \lambda_m + \frac12 \sigma_{s-m}^2 \\
    \ge
    \left(
      \mu_s - \frac12 \sigma_s^2
      +
      \frac1T \sum_{n=1}^N \gamma(s_n)
    \right)
    -
    \left(
      \mu_m - \frac12 \sigma_m^2
      +
      \frac1T \sum_{n=1}^N \Gamma(m_n)
    \right) \\
    +
    \frac{\sigma_s^2}{2}
    +
    \frac{\sigma_m^2}{2}
    -
    \sigma_{sm} \\
    \ge
    \mu_s - \mu_m + \sigma_m^2 - \sigma_{sm}
    -
    \frac1T \sum_{n=1}^N \left|\gamma(s_n)\right|
    -
    \frac1T \sum_{n=1}^N \Gamma(m_n);
  \end{multline*}
  combining this with Proposition~\ref{prop:1},
  we can see that the EMH for $m$ predicts
  \[
    \lambda_s
    -
    \lambda_m
    +
    \frac12 \sigma_{s-m}^2
    >
    -\epsilon (1 + \sigma_{s-m}^2),
  \]
  for any $\epsilon>0$.
  This completes the proof.
\end{proof}

\end{document}